\documentclass[11pt]{article}

\usepackage{amsfonts}
\usepackage{amsmath}
\usepackage{amssymb}
\usepackage{amsthm}
\usepackage[mathscr]{eucal}
\usepackage{bbold}
\usepackage{braket}
\usepackage{color}
\usepackage{dsfont}
\usepackage{enumitem}
\usepackage{jheppub}
\usepackage{mathtools}
\usepackage{physics}
\usepackage{tensor}
\usepackage[normalem]{ulem}
\usepackage{ulem}
\usepackage{subcaption}

\renewcommand{\arraystretch}{1}
\newcommand{\ignore}[1]{}

\newtheorem{theorem}{Theorem}[section]

\newtheorem{corollary}{Corollary}[theorem]
\newtheorem{proposition}{Proposition}[theorem]

\definecolor{shccolor}{rgb}{0.9,0.1,0.1}

\definecolor{vpscolor}{rgb}{0.1,0.4,0.9}

\definecolor{ncccolor}{rgb}{0.4,0.9,0.4}

\definecolor{nbcolor}{rgb}{0.5,0.3,0.4}

\makeatletter
\def\@fpheader{ }
\makeatother

\title{A Gap Between the Hypergraph and Stabilizer Entropy Cones}

\author[a,b]{Ning Bao,}
\author[b]{Newton Cheng,}
\author[c]{Sergio Hern\'andez-Cuenca,}
\author[b]{Vincent Paul Su}

\affiliation[a]{Computational Science Initiative, Brookhaven National Lab, Upton, NY, 11973, USA}
\affiliation[b]{Center for Theoretical Physics, Department of Physics, University of California, Berkeley, CA 94720, USA}
\affiliation[c]{Department of Physics, University of California, Santa Barbara, CA 93106, USA}

\emailAdd{ningbao75@gmail.com}
\emailAdd{newtoncheng@berkeley.edu}
\emailAdd{sergiohc@ucsb.edu}
\emailAdd{vipasu@berkeley.edu}


\abstract{
It was recently found that the stabilizer and hypergraph entropy cones coincide for four parties, leading to a conjecture of their equivalence at higher party numbers. In this note, we show this conjecture to be false by proving new inequalities obeyed by all hypergraph entropy vectors that exclude particular stabilizer states on six qubits. By further leveraging this connection, we improve the characterization of stabilizer entropies and show that all linear rank inequalities at five parties, except for classical monotonicity, form facets of the stabilizer cone. Additionally, by studying minimum cuts on hypergraphs, we prove some structural properties of hypergraph representations of entanglement and generalize the notion of entanglement wedge nesting in holography.
}


\begin{document}

\maketitle

\section{Introduction}

The entanglement structure of a general quantum state can be studied by considering the von Neumann entropies of all of its subsystems, i.e. all possible unions of parties resulting from a factorization of the Hilbert space. The quantum entropy cone captures the set of entanglement entropies of subsystems which are physically realizable by quantum states. The structure of this object for mixed states on $4$ and more parties remains an open question in quantum information theory. In recent years, however, progress has been made in understanding analogous entropy cones for special subsets of quantum states, in particular for holographic states~\cite{bao2015holographic, hernandezcuenca2019, Avis:2021xnz}, stabilizer states~\cite{linden2013quantum}, and states with entropies describable by hypergraph models, which define the hypergraph entropy cone~\cite{Bao:2020zgx,Walter:2020zvt}. Since holographic entropies are completely captured by graph models~\cite{bao2015holographic}, it is clear that the holographic entropy cone is contained in the hypergraph cone, a containment which is strict beginning at $3$ parties~\cite{Bao:2020zgx}. In this note, we will focus on the nontrivial relationship between the hypergraph and stabilizer entropy cones. In~\cite{Bao:2020zgx}, it was shown that these entropy cones coincide for up to and including $4$ parties. Though the hypergraph entropy cone was originally studied as a generalization of the holographic one, this observed equivalence with the stabilizer entropy cone naturally led to the conjecture that this relation continues to hold at any party number. Shortly after,~\cite{Walter:2020zvt} showed that the hypergraph entropy cone is contained in the stabilizer entropy cone using random stabilizer tensor network techniques, thereby proving one direction of the conjectured equivalence. In this work, we will show that the reverse direction of the conjecture is false; namely, that there exists stabilizer states whose entropies do not admit a hypergraph model representation. Given the result of~\cite{Walter:2020zvt} and the fact that some stabilizer entropies cannot be realized by hypergraph models, one is able to conclude that the hypergraph entropy cone in fact lies strictly inside the stabilizer one for $5$ and more parties.

The structure of this note is as follows. In Section~\ref{sec:stabilizers}, we review relevant aspects of what is known about entanglement entropies of stabilizer states, with a focus on facts relating stabilizer state entropies to those of the graph states of~\cite{hein2004multiparty}. In Section~\ref{sec:hypergraphs}, we review the results of~\cite{Bao:2020zgx, Walter:2020zvt}, presenting hypergraph models of entanglement and the associated method for proving entropy inequalities. In addition, we state and prove several results regarding hypergraph minimum cuts of minimal vertex size, which generalize the notion of entanglement wedges in holography. This allows us to prove a hypergraph version of entanglement wedge nesting and to better characterize the behavior of hypergraph entropies. In Section~\ref{sec:entropyspace}, we construct hypergraph representations of the entropies of all graph states for up to six qubits save one. To declare the entropies of the offending graph state non-realizable, we prove novel hypergraph entropy inequalities that exclude it, thus demonstrating strict containment of the hypergraph entropy cone inside the stabilizer one in combination with~\cite{Walter:2020zvt}. Additionally, we use hypergraph-realizable entropies to tighten the known gap between the stabilizer and quantum linear rank cones. We end in Section~\ref{sec:discussion} with some concluding remarks and comments on potential future directions.

\section{Stabilizer States}\label{sec:stabilizers}
Stabilizer states are an important class of quantum states that make frequent appearances across the quantum information literature~\cite{gottesman1997stabilizer}. Such states derive their name from their efficient description: instead of specifying a state vector, a stabilizer state can instead be specified as the unique simultaneous eigenvector with eigenvalue $+1$ of a given list of operators built out of tensor products of Pauli matrices. Equivalently, they can be described as the set of all states that can be generated by the action of Hadamard, CNOT and phase gates on the all-zeroes state $\ket{0}^{\otimes n}$ for any $n\in\mathbb{Z_+}$.

As a result of such a description, the entanglement structure of stabilizer states is much better understood than that of generic quantum states. Consider the von Neumann entropy,
\begin{equation}\label{eq:entropy}
    S_I = - \Tr \rho_{I}\log\rho_{I},
\end{equation}
where $\rho_I$ is the reduced density matrix obtained from the $n$-party state $\rho_{1,\ldots,n}$ by tracing over all parties $i \not\in I \subseteq [n] \equiv \{1,\dots,n\}$. While the entanglement of generic $4$-party states $\rho_{ABCD}$ is poorly understood for an arbitrary quantum state, the entropy cone for $4$-party stabilizer states has been completely characterized~\cite{linden2013quantum}. In particular, it is bounded by the Ingleton inequality, which is not obeyed by general quantum states and reads~\cite{linden2013quantum,Gross_2013}
\begin{equation}\label{eq:ingleton}
    I(A:B|C) + I(A:B|D) + I(C:D) - I(A:B) \geq 0,
\end{equation}
where $I(A:B) = S_A + S_B - S_{AB}$ denotes the mutual information between $A$ and $B$, and $I(A:B|C) = S_{AC} + S_{BC} - S_{ABC} - S_C$ is the conditional mutual information between $A$ and $B$ conditioned on $C$.

A useful way of constructing qubit stabilizer states is via graphs, which give rise to the notion of \textit{graph states} as described in~\cite{hein2004multiparty}. Importantly, these graph states are distinct from the graph and hypergraph models of entanglement that we discuss in the other parts of this note. Given a graph $G=(V,E)$ with vertex set $V$ and edge set $E\subset V\times V$, the corresponding graph state is prepared as follows: at every vertex $x\in V$, place a qubit initialized to $\ket{+} = \frac{1}{\sqrt{2}}( \ket{0}+\ket{1} )$. For every edge in $(x,y)\in E$, apply a controlled-$Z$ gate $U_{x,y}$ between the two qubits associated to the two vertices in the edge. The resulting graph state $\Ket{G}$ takes the following form:
\begin{equation}\label{eq:gstate}
    \ket{G} = \prod_{(x,y)\in E}U_{x,y}\ket{+}^{\otimes V}.
\end{equation}
Note that $\ket{+}$ states are symmetric with respect to the target and control of controlled-$Z$ gates, and such gates applied to different qubits commute, implying that the order of application is immaterial.

From the definition, we see that every graph state is a stabilizer state, as controlled-$Z$ gates are equivalent to a combination of CNOT and Hadamard gates. However, the surprisingly useful aspect of these graph states is that they fully characterize the set of stabilizer entropies~\cite{hein2004multiparty}:
\begin{theorem}
    Every stabilizer state is equivalent to a graph state, up to local unitary gates.
\end{theorem}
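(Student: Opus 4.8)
The plan is to pass to the binary symplectic description of stabilizer groups and reduce it, using only local Clifford gates together with changes of generating set, to the canonical form associated with a graph state. Encode a stabilizer state on $n$ qubits, up to signs, by an $n\times 2n$ matrix $[\,X\,|\,Z\,]$ over $\mathbb{F}_2$ of rank $n$ whose rows are the binary representatives of $n$ generators; commutativity of the generators is equivalent to the isotropy condition $X Z^{T}=Z X^{T}$. In this language, replacing the rows by another basis of their span amounts to choosing a different generating set and leaves both the stabilizer group and the state unchanged, a Hadamard on qubit $i$ swaps column $i$ of $X$ with column $i$ of $Z$, and a phase gate on qubit $i$ adds column $i$ of $X$ to column $i$ of $Z$.

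The key step, which I expect to be the main obstacle, is to make the $X$-block invertible using only local Hadamards. I would row-reduce $[\,X\,|\,Z\,]$ so that $X$ is in reduced row echelon form; writing $r=\mathrm{rank}(X)$, let $S$ be the set of qubits whose $X$-column is not a pivot column, so that $|S|=n-r$. The $n-r$ generators with vanishing $X$-part are then purely of $Z$-type, and the matrix $F$ built from their $Z$-parts has full row rank $n-r$. Pairing each of these generators against the other $r$ generators in the isotropy condition forces every column of $F$ indexed outside $S$ to lie in the span of the columns of $F$ indexed by $S$; since $F$ has full row rank, its restriction to the columns in $S$ is an invertible $(n-r)\times(n-r)$ matrix. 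Using it to row-reduce the pure-$Z$ generators and then applying a Hadamard to every qubit in $S$ makes the $X$-block invertible — after reordering columns it is block upper-triangular with identity diagonal blocks.

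With $X$ invertible, left-multiplying $[\,X\,|\,Z\,]$ by $X^{-1}$ (again a mere change of generators) brings the matrix to $[\,I\,|\,\Gamma\,]$, and the isotropy condition now reads $\Gamma=\Gamma^{T}$, so $\Gamma$ is symmetric. For each $i$ with $\Gamma_{ii}=1$ I would apply a phase gate to qubit $i$: since the $X$-block equals the identity, this flips exactly the diagonal entry $\Gamma_{ii}$ and changes nothing else, so afterwards $\Gamma$ is symmetric with vanishing diagonal, i.e.\ the adjacency matrix of a graph $G$, and $[\,I\,|\,\Gamma\,]$ is precisely the check matrix of the graph state $\ket{G}$ of \eqref{eq:gstate}. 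Finally, the signs of the generators, suppressed throughout, can be matched by conjugating with a suitable tensor product of single-qubit Pauli $Z$ operators, since conjugation by $Z$ on qubit $i$ flips the sign of the $i$-th graph-state generator $X_i\prod_{(i,j)\in E}Z_j$ and of no other. Composing all the Hadamards, phase gates, and Pauli corrections used above then exhibits the original stabilizer state as a local-unitary — indeed local-Clifford — image of $\ket{G}$, which is the assertion.
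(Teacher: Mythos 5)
Your proof is correct: the binary symplectic reduction, the use of the isotropy condition to show that the pure-$Z$ generators restricted to the non-pivot columns form an invertible block (so that Hadamards on those qubits make the $X$-block invertible), and the final cleanup by phase gates and Pauli-$Z$ sign corrections are all sound, and together they constitute the standard local-Clifford argument. The paper itself does not prove this theorem but simply cites \cite{hein2004multiparty}, and your argument is essentially the one found in that reference and in the related work of Van den Nest, Dehaene and De Moor, so there is no substantive difference in approach to report.
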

More explicitly, this means that given a stabilizer state $\ket{\psi}$ on $n$ qubits, there is a set of single-qubit unitaries $\{U_1,\dots, U_n\}$ such that
\begin{equation}\label{eq:gsequiv}
    \ket{G} = U_1\otimes\ldots\otimes U_n\ket{\psi},
\end{equation}
where $\ket{G}$ is a graph state constructed as in~\eqref{eq:gstate}. Because the von Neumann entropy is invariant under the action of local unitaries, it follows that the entropies of any stabilizer state can be realized by an entropically equivalent graph state related to it by~\eqref{eq:gsequiv}.

\section{Hypergraph Entropies}\label{sec:hypergraphs}
In this section, we first briefly review the hypergraph models of entanglement that were introduced and first studied in~\cite{Bao:2020zgx}. We also highlight the existence of a ``contraction map'' method for proving entropy inequalities on hypergraph models, which was described in detail in~\cite{Bao:2020zgx} generalizing tools originally developed in~\cite{bao2015holographic}. Readers familiar with this material may skip ahead to Section~\ref{ssec:mincutprops}, where we discuss aspects of hypergraph minimum cuts and prove some properties they obey which generalize ideas in holography such as entanglement wedge nesting.

\subsection{Hypergraph Models}\label{ssec:hypermodels}
A \textit{graph} is defined by a pair $G = (V,E)$, consisting of a vertex set $V$ and an edge set $E\subseteq V \times V$, where each edge is a pair of vertices. In the present context all graphs are taken to be undirected, thus consisting of undirected edges given by unordered vertices. A \textit{hypergraph} is a generalization of a graph in which edges are promoted to hyperedges, which are arbitrary subsets of vertices $e\subseteq V$ of cardinality $\abs{e}\geq2$. An edge of cardinality $k \geq 2$ vertices is called a \textit{$k$-edge}, and a hypergraph consisting solely of trivial edges and $k$-edges is called a \textit{$k$-uniform} (hyper)graph. Denoting the power set of $V$ by $\wp(V)$, one thus now upgrades $E$ to a hyperedge set $E\subseteq\wp(V)$. Hyperedges can be assigned capacities or weights via a map $\omega: E\to\mathbb{R}_+$ to define a weighted hypergraph in much the same way as it is done for standard graphs .We define the \textit{rank} of a hypergraph as the largest cardinality of hyperedges of non-zero weight in the hypergraph. A hypergraph of rank $k$ will be called a \textit{$k$-(hyper)graph}. 

A \textit{hypergraph model} for $n$ parties is simply a weighted hypergraph with the following additional structure: a special subset of vertices $\partial V \subseteq V$, referred to as \textit{external}, are assigned to quantum parties via a \textit{coloring}, i.e. a surjective map $\partial V \to [n+1]$ where each $i\in[n]$ corresponds to one party and the $(n+1)^{\text{th}}$ color accounts for the purifier. The remaining vertices $V\setminus\partial V$ are generally called \textit{internal} and are responsible for the rich entanglement structure that hypergraph models are able to encode. In particular, the entropy of a subsystem $I\subseteq[n]$ in a hypergraph is defined as the total weight of the minimum cut, or \textit{min-cut}, that separates the external vertices colored by $I$ from the others in $\partial V$.

Hypergraph models encode the entropic structure of a certain class of quantum states, but otherwise make no assumption about the possible underlying properties of those states. Because of the graphical and combinatorial nature of hypergraph min-cuts, hypergraph entropies turn out to obey certain constraints. These take the form of inequalities which are linear and homogeneous in the entropies, and can be proven using a method based on finding a contraction map. Our later results will use inequalities proved with this technique, which we briefly review. For more details, we refer the interested reader to~\cite{Bao:2020zgx}.

A general linear entropy inequality may be written as follows:
\begin{equation}\label{eq:ineq}
    \sum_{l=1}^{L} \alpha_l S(I_l) \geq \sum_{r=1}^{R} \beta_r S(J_r),
\end{equation}
where $S(I_l)$ is the von Neumann entropy of the $I_l$ party and enters the inequality with coefficient $\alpha_l$, and similarly on the RHS for $S(J_r)$ and $\beta_r$. Note that the LHS has $L$ terms and the RHS has $R$ terms.
We now introduce a generalization of the weighted Hamming distance between two bitstrings: a weighted indicator function that quantifies the difference between multiple bitstrings, $x^{1}, \ldots, x^{k}$:
\begin{equation}
 i^{k}_{\alpha}(x^{1}, \ldots, x^{k}) = \sum_{l} \alpha_{l} i^{k}(x^{1}_{l}, \ldots, x^{k}_{l})
\end{equation}
where $x_j^i$ is the $j$th bit of the $i$th bit string, and the $i^{k}(x^{1}_{l},\ldots, x^{k}_{l})$ on the RHS is 0 if all the bits $x^{1}_{l},\ldots, x^{k}_{l}$ are identical and 1 otherwise. In other words, the RHS is a weighted sum over indicator functions $i^k:\{0,1\}^k \to \{0,1\}$.

We can now state the key results that make up the \textit{proof by contraction} method for hypergraphs -- see \cite{Bao:2020zgx} for more details on these results and their proofs.
\begin{theorem}\label{thm:gencontraction}
    Let $f:\{0,1\}^L\to\{0,1\}^R$ be an $i^k_{\alpha}$-$i^k_{\beta}$ contraction:
    \begin{equation}\label{eq:ikcontract}
        i^k_{\alpha}(x^1,\ldots,x^k) \geq i^k_{\beta}(f(x^1),\ldots,f(x^k)) \qquad \forall x^1,\ldots,x^k\in\{0,1\}^L.
    \end{equation}
    If $f\left(x^{(i)}\right) = y^{(i)}$ for all $i\in[n+1]$, then~\eqref{eq:ineq} is a valid inequality on $k$-uniform graphs.
\end{theorem}
\noindent If such a map $f$ exists for a given inequality and a certain class of graphs, we say that the inequality \textit{contracts} on those graphs. Although this theorem only applies to $k$-uniform graphs, it turns out to be sufficient to prove validity of an inequality for all hypergraphs with hyperedges of cardinality $k$ or less:
\begin{corollary}\label{cor:kugkg}
    If \eqref{eq:ineq} contracts on $k$-uniform graphs, then it contracts on $k$-graphs.
\end{corollary}
\noindent It also turns out to be possible to upper-bound the rank of hypergraphs on which one needs to prove validity of an inequality in order to be guaranteed its validity on hypergraphs with hyperedges of arbitrary cardinality:
\begin{proposition}\label{prop:srgraphs}
    If \eqref{eq:ineq} contracts on $R$-graphs, then it is a valid inequality on all hypergraphs of finite rank.
\end{proposition}

These results provide a recipe for proving linear entropy inequalities on hypergraphs by contraction.
In particular, one attempts to explicitly construct a map $f$ satisfying the above contraction conditions on hypergraph models of rank $R$. This involves the computational challenge of finding a set of $2^L$ bitstrings of length $R$ for the images of $f$ compatible with all inequalities specified by Theorem \ref{thm:gencontraction}
 Our main result of strict containment comes from such a provable inequality that is violated for a particular stabilizer state.

\subsection{Properties of Hypergraph Min-Cuts}\label{ssec:mincutprops}

Ultimately, our proof of strict containment of the hypergraph entropy cone within the stabilizer entropy cone will rely on proving additional hypergraph entropy inequalities that are not obeyed by specific stabilizer states. In parallel, we also attempted to explicitly prove the impossibility of reproducing a given entropy vector via a consistent hypergraph. While this method was ultimately insufficient, we provide here some results from the pursuit of this direction that may find interpretation in the context of generalized versions of holography. Since graphs are a special case of hypergraphs, these will also apply as statements in the holographic limit. The statements we prove can be understood as straightforward extensions of min-cut properties in standard graphs, cf.~\cite{Gallo:1989,nezami2016multipartite}.

It is convenient at this point to introduce some notation and definitions which will be consistently used throughout this section. Recall that a cut is a subset of vertices $W\subseteq V$ which defines a bipartition of the vertex set into two disjoint subsets by $V = W \cup W^\complement$. Associated to it, one defines a hyperedge cut $C(W)\subseteq E$ consisting of the subset of hyperedges containing at least one vertex in both $W$ and $W^\complement$. The total weight of all hyperedges in $C(W)$ is then given by the cut function $c(W) \equiv \sum_{e\in C(W)}\omega(e)$. Denote now general subsets of external vertices by $X,Y\subseteq\partial V$. One then defines a min-cut $W_X$ for $X$ as a cut with $W_X \cap \partial V = X$ of minimal total weight. In other words, $W_X$ is a non-necessarily-unique set of vertices containing precisely the external ones in $X\subseteq\partial V$ which minimizes the value of the cut function. For $X$ and $Y$ min-cuts we will use the notation $x \equiv W_X$ and $y \equiv W_Y$ below.

\begin{theorem}\label{thm:emptycap}
    If $X \cap Y = \emptyset$, then there exist min-cuts $x,y$ such that $x \cap y = \emptyset$.\footnote{We note that an analogous result for standard 2-graphs was succinctly proven in~\cite{nezami2016multipartite}, Lemma $9$, by applying submodularity and symmetry of the graph cut function. Because the hypergraph cut function is also submodular and symmetric, their proof also applies in this case. The proof presented here is essentially an explicit and more intuitive version of theirs.}
\end{theorem}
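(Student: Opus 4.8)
The plan is to run the classical graph-theoretic \emph{uncrossing} argument, using two elementary properties of the hypergraph cut function $c$: it is \emph{symmetric}, $c(W) = c(W^\complement)$ for all $W \subseteq V$, and \emph{submodular}, $c(W_1) + c(W_2) \geq c(W_1 \cap W_2) + c(W_1 \cup W_2)$ for all $W_1, W_2 \subseteq V$. Symmetry is immediate, since a hyperedge lies in $C(W)$ exactly when it meets both $W$ and $W^\complement$, a condition unchanged by swapping $W \leftrightarrow W^\complement$. For submodularity I would write $c = \sum_{e \in E} \omega(e)\,\chi_e$ with $\chi_e(W) \in \{0,1\}$ the indicator that the hyperedge $e$ is split by $W$; since $\chi_e(W) = 1 - [e \subseteq W] - [e \subseteq W^\complement]$ and the set function $W \mapsto [e \subseteq W]$ is supermodular (hence so is $W \mapsto [e \subseteq W^\complement]$, as precomposition with complementation preserves supermodularity), each $\chi_e$ is submodular, and a nonnegative combination of submodular functions is submodular.

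With these in hand, fix any min-cuts $x$ for $X$ and $y$ for $Y$, so $x \cap \partial V = X$, $y \cap \partial V = Y$, and $c(x), c(y)$ are minimal among cuts inducing these external bipartitions. Applying submodularity to the pair $x$ and $y^\complement$ and then simplifying with symmetry,
\begin{equation}
    c(x) + c(y) = c(x) + c(y^\complement) \geq c(x \cap y^\complement) + c(x \cup y^\complement) = c(x \setminus y) + c(y \setminus x),
\end{equation}
where I used $c(y^\complement) = c(y)$ and $c(x \cup y^\complement) = c\big((x \cup y^\complement)^\complement\big) = c(y \setminus x)$, together with $x \cap y^\complement = x \setminus y$.

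It remains to identify $x \setminus y$ and $y \setminus x$ as min-cuts for $X$ and $Y$. Since $X \subseteq \partial V$ and $X \cap Y = \emptyset$, one has $X \cap y = X \cap (y \cap \partial V) = X \cap Y = \emptyset$, so $(x \setminus y) \cap \partial V = (x \cap \partial V) \setminus y = X$, and symmetrically $(y \setminus x) \cap \partial V = Y$. Thus $x \setminus y$ and $y \setminus x$ are valid cuts for $X$ and $Y$, whence $c(x \setminus y) \geq c(x)$ and $c(y \setminus x) \geq c(y)$ by minimality of $x$ and $y$; combined with the inequality above, both must be equalities, so $x \setminus y$ and $y \setminus x$ are themselves min-cuts for $X$ and $Y$ respectively, and they are manifestly disjoint. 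The only step requiring genuine care is the verification that $c$ is submodular on hypergraphs as opposed to ordinary graphs; once that is in place the argument is the standard uncrossing trick applied verbatim, as the footnote anticipates. I would streamline the write-up by isolating the symmetry and submodularity of $c$ as a short preliminary remark and then presenting the uncrossing in two lines.
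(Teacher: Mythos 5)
Your proof is correct, and it is a genuinely different presentation from the one in the paper: you have reconstructed the submodularity-plus-symmetry uncrossing argument that the paper's own footnote attributes to Lemma 9 of Nezami--Walter, whereas the proof printed in the body deliberately avoids invoking submodularity as a lemma. The paper instead sets $\delta \equiv x \cap y$, observes that $x\setminus\delta$ is still a cut for $X$ and $y\setminus\delta$ for $Y$, and then argues by hand that the pairs $(x,\, y\setminus\delta)$ and $(x\setminus\delta,\, y)$ induce the same bipartitions of $x\cup y$, so that $c(x)\leq c(x\setminus\delta)$ forces $c(y)\geq c(y\setminus\delta)$; minimality of $y$ then gives equality and one takes $y\setminus\delta$ as the new min-cut for $Y$, keeping $x$ untouched. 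Your route front-loads the work into verifying that the hypergraph cut function is submodular --- which you do correctly, via the decomposition $\chi_e(W) = 1 - [e\subseteq W] - [e\subseteq W^\complement]$ (valid because $e\neq\emptyset$ cannot lie in both $W$ and $W^\complement$) and the fact that complementation preserves supermodularity --- after which the uncrossing applied to $x$ and $y^\complement$ finishes in two lines and hands you the slightly more symmetric conclusion that $x\setminus y$ and $y\setminus x$ are \emph{simultaneously} min-cuts. What your approach buys is generality and brevity once submodularity is on the table (the same lemma powers the nesting theorem that follows); what the paper's approach buys is a self-contained, elementary argument that makes visible exactly which hyperedges are being traded between the competing cuts, without appealing to a general property of the cut function. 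Your instinct to isolate symmetry and submodularity as a preliminary remark is sound and would make your version the cleaner one to generalize.
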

\begin{proof}
    Suppose $X \cap Y = \emptyset$ and let $\delta\equiv x\cap y$. Because $X$ and $Y$ are disjoint, $x \setminus \delta$ is also a cut for $X$ and $c(x) \leq c(x \setminus \delta)$ by the min-cut assumption. The analogous cuts for $Y$ are just $y$ and $y\setminus \delta$. Consider now just the set $x\cup y$ and notice how all previous cuts partition it. In particular, one easily sees that $y\setminus \delta$ partitions $x\cup y$ in precisely the same way as $x$ and, similarly, $y$ partitions $x\cup y$ in the same way as $x \setminus \delta$. Therefore, if $c(x) \leq c(x \setminus \delta)$, then $c(y) \geq c(y\setminus \delta)$. However, because $y$ is a min-cut, we also have $c(y) \leq c(y\setminus \delta)$, implying that $c(y) = c(y \setminus \delta)$. We can then choose $y \setminus \delta$ to be the min-cut for $Y$, which manifestly has empty intersection with $x$.
\end{proof}

This form of disjoint complementarity of the min-cuts of disjoint subsets of external vertices immediately leads to the following result on nesting of hypergraph min-cuts:

\begin{theorem}
    If $Y \subset X$, then there exist min-cuts $x,y$ such that $y \subset x$.
\end{theorem}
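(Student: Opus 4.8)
The plan is to deduce this from Theorem \ref{thm:emptycap} by decomposing $X$ into the disjoint union $Y \sqcup (X \setminus Y)$, where $X \setminus Y$ denotes the external vertices in $X$ but not in $Y$. First I would apply Theorem \ref{thm:emptycap} to the disjoint pair $Y$ and $X\setminus Y$ (viewed as subsets of $\partial V$), obtaining min-cuts $y$ for $Y$ and $z$ for $X\setminus Y$ with $y \cap z = \emptyset$. The natural candidate for a min-cut of $X$ nesting $y$ is then $x \equiv y \cup z$: by construction $x \cap \partial V = Y \cup (X\setminus Y) = X$, and $y \subseteq x$ with the inclusion strict as long as $z$ contains at least one vertex (which it does, since $X\setminus Y$ is nonempty whenever $Y \subsetneq X$). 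So the only thing left to check is that $x = y\cup z$ is actually a \emph{min}-cut for $X$, i.e.\ that $c(y\cup z)$ equals the minimal cut value $c(W_X)$.

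For that step I would use submodularity of the hypergraph cut function together with the min-cut optimality of $y$ and $z$. Submodularity gives $c(y) + c(z) \geq c(y\cup z) + c(y\cap z) = c(y\cup z)$ since $y \cap z = \emptyset$ and $c(\emptyset)=0$. On the other hand, given any min-cut $W_X$ for $X$, I would like to produce from it a cut for $Y$ and a cut for $X\setminus Y$ whose weights sum to at most $c(W_X)$; combined with the optimality $c(y) \leq c(W_Y \text{-type cut})$ and $c(z)\leq c(X\setminus Y\text{-type cut})$ this would pin $c(y\cup z) \leq c(W_X)$, forcing equality. Concretely, intersecting and unioning $W_X$ with the min-cuts of $Y$ and $X\setminus Y$ and applying submodularity twice — in the same spirit as the proof of Theorem \ref{thm:emptycap} — should yield the needed bound; alternatively, one can invoke the disjointness conclusion of Theorem \ref{thm:emptycap} more directly by noting that $W_X$ restricted appropriately already behaves like a union of a $Y$-cut and an $(X\setminus Y)$-cut on the relevant vertex sets.

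The main obstacle I anticipate is the bookkeeping in that last inequality: one must be careful that when $W_X$ is split into pieces adapted to $Y$ and $X\setminus Y$, no hyperedge is double-counted and no hyperedge straddling the split is dropped, so that the weights genuinely subadditively combine. This is exactly the kind of partition-of-$x\cup y$ argument used in Theorem \ref{thm:emptycap}, and I expect the cleanest route is to mimic that proof: set $\delta$-type intersection sets, observe which cuts induce the same bipartition on the union of the relevant vertex sets, and chase the resulting weight (in)equalities until $c(y\cup z)$ is squeezed to the minimum. Once $x = y\cup z$ is certified as a min-cut, the nesting $y \subset x$ is immediate and the proof is complete.
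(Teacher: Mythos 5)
There is a genuine gap: your candidate $x \equiv y \cup z$, with $y,z$ disjoint min-cuts for $Y$ and $X\setminus Y$, is in general \emph{not} a min-cut for $X$, and the squeezing argument you propose cannot close this. The bound you need, namely that from a min-cut $W_X$ one can extract a $Y$-cut and an $(X\setminus Y)$-cut whose weights sum to at most $c(W_X)$, is precisely superadditivity $c(y)+c(z)\leq c(W_X)$, i.e. $S_Y+S_{X\setminus Y}\leq S_X$. This is the reverse of subadditivity and fails whenever $I_2(Y:X\setminus Y)>0$. A concrete counterexample: take external vertices $A$, $B$, $O$ (purifier) and one internal vertex $v$, with unit-weight edges $Av$, $Bv$, $Ov$. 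With $Y=\{A\}$ and $X=\{A,B\}$, the unique min-cuts are $y=\{A\}$ and $z=\{B\}$, each of weight $1$, but $c(y\cup z)=2$ while the min-cut for $X$ is $\{A,B,v\}$ of weight $1$. Indeed, the paper's third theorem in Section \ref{ssec:mincutprops} shows $x\cup y$ is a min-cut for $X\cup Y$ only under the extra hypothesis $I_2(X:Y)=0$; your argument implicitly assumes this for free.

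The fix — and the paper's route — is to decompose against the \emph{complement} rather than internally: set $Z\equiv X^\complement=\partial V\setminus X$, which is disjoint from $Y$ since $Y\subset X$. Theorem \ref{thm:emptycap} gives disjoint min-cuts $y$ and $z$ for $Y$ and $Z$. Because the hyperedge cut function is symmetric, $c(W)=c(W^\complement)$, complementation $z\mapsto z^\complement$ is a bijection between cuts for $Z$ and cuts for $X$ preserving weight, so $x\equiv z^\complement$ is automatically a min-cut for $X$ with no minimality left to verify. Then $y\cap z=\emptyset$ reads $y\subset x$ directly. Your decomposition $X=Y\sqcup(X\setminus Y)$ is natural but forces you to certify minimality of a union of cuts, which is exactly the step that is false in general; the complementation trick gets minimality for free from purification symmetry.
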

\begin{proof}
    If $Y\subset X$, then $Z \equiv X^\complement$ satisfies $Z \cap Y = \emptyset$. Applying Theorem~\ref{thm:emptycap}, there exists min-cuts $y,z$ respectively for $Y,Z$ such that $y\cap z=\emptyset$. Now note that if $z$ is a min-cut for $Z$, then so is $z^\complement$ for $Z^\complement=X$.\footnote{Here the complement of cuts is taken with respect to the full vertex set, i.e. $z^\complement \equiv V\setminus z$, whereas the complement of external vertices is taken only with respect to boundary vertices, i.e. $Z^\complement = \partial V \setminus Z$.} Equivalently, $x\equiv z^\complement$ defines a min-cut for $X$, and since $y\cap x^\complement = \emptyset$, one has that $y\subset x$, thereby proving the desired result.
\end{proof}

In other words, if the subsets of external vertices are nested, then so are their min-cuts, provided they are unique. If they are not unique, then one is always guaranteed that there exist ``minimal'' min-cuts which are nested. This is precisely the statement of entanglement wedge nesting in the holographic context, now generalized to hypergraphs beyond holography.

When attempting to construct a hypergraph that reproduces a particular realizable entropy vector, there is in general a large degree of freedom in how one arranges the internal vertices. Trying to fix this arbitrariness without loss of generality allows one to constrain the structure of hypergraphs one needs to consider. Put differently, one can always dispense with unnecessary structures, particularly when the entropy vector obeys certain properties. The following simple result shows that no additional internal vertices are needed to account for the min-cut of $X\cup Y$ when its constituents $X$ and $Y$ share no mutual information:

\begin{theorem}
    If $I_2(X:Y)=0$, then $xy=x\cup y$ is a min-cut for $X \cup Y$.
\end{theorem}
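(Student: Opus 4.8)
The plan is to pin down the cut value $c(x\cup y)$ by trapping it between the entropy $S_{X\cup Y}$ from below and $S_X+S_Y$ from above, and then note that the hypothesis $I_2(X:Y)=S_X+S_Y-S_{X\cup Y}=0$ forces these two bounds to coincide; the trapped value then equals $S_{X\cup Y}$, which is exactly the statement that $x\cup y$ is a min-cut for $X\cup Y$. Throughout I write $S_X\equiv c(x)$ and $S_Y\equiv c(y)$ for the entropies of $X$ and $Y$, using that $x,y$ are min-cuts, and $S_{X\cup Y}$ for the entropy of $X\cup Y$.

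First I would check admissibility: since $x\cap\partial V=X$ and $y\cap\partial V=Y$, the union satisfies $(x\cup y)\cap\partial V=X\cup Y$, so $x\cup y$ is a valid cut for $X\cup Y$ and hence $S_{X\cup Y}\le c(x\cup y)$ by minimality. For the matching upper bound I would invoke submodularity and nonnegativity of the hypergraph cut function $c$ — the same two properties highlighted in the footnote to Theorem \ref{thm:emptycap} — to write
\begin{equation}
c(x\cup y)\;\le\;c(x\cup y)+c(x\cap y)\;\le\;c(x)+c(y)\;=\;S_X+S_Y .
\end{equation}
Chaining the two bounds gives $S_{X\cup Y}\le c(x\cup y)\le S_X+S_Y$, and since $I_2(X:Y)=0$ the outer terms are equal, so every inequality is saturated; in particular $c(x\cup y)=S_{X\cup Y}$, i.e.\ $x\cup y$ realizes a min-cut for $X\cup Y$. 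Since no special choice of representatives was made, this actually holds for \emph{any} min-cuts $x$ of $X$ and $y$ of $Y$, and when $X\cap Y\neq\emptyset$ the same chain additionally forces $c(x\cap y)=0$, hence $S_{X\cap Y}=0$.

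I do not anticipate a real obstacle; the only subtlety is bookkeeping. One is tempted to first use Theorem \ref{thm:emptycap} (when $X\cap Y=\emptyset$) to take $x\cap y=\emptyset$ and then claim additivity $c(x)+c(y)=c(x\cup y)$, but in a hypergraph a single hyperedge can straddle $x$, $y$, and $(x\cup y)^\complement$ at once, so disjointness of vertex sets only yields $C(x\cup y)\subseteq C(x)\cup C(y)$ and thus the inequality $c(x\cup y)\le c(x)+c(y)$ — which is, conveniently, all that is needed. Phrasing the bound through submodularity as above avoids even this point and handles overlapping $X,Y$ uniformly, so that is the version I would write up.
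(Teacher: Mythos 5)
Your proposal is correct and follows essentially the same route as the paper: both arguments trap $c(x\cup y)$ between the min-cut value $S_{X\cup Y}$ from below and $c(x)+c(y)=S_X+S_Y$ from above, then use $I_2(X:Y)=0$ to force equality. The only cosmetic difference is that you justify the upper bound via submodularity, while the paper takes the (equivalent, and as you note in your final paragraph, directly verifiable) containment $C(x\cup y)\subseteq C(x)\cup C(y)$ as clear.
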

\begin{proof}
    In a general hypergraph, $x \cup y$ is clearly a cut for $X\cup Y$ which satisfies $c(x\cup y) \leq c(x) + c(y)$. However the assumption that $I_2(X:Y) = 0$ means that the min-cut $xy$ of $X\cup Y$ obeys $c(xy) = c(x) + c(y)$. It must thus be the case that $c(x\cup y)$ attains its maximum value $c(xy)$, which itself is the min-cut value. It follows that $x\cup y$ is a min-cut for $X \cup Y$.
\end{proof}
Note that the above result is easily understood in terms of the equivalent stabilizer tensor network that realizes the given entropy vector. It is merely the statement that when the mutual information between two subsystems $X$ and $Y$ vanishes, the joint state is a product state, with no correlations between the subsystems.

\section{Entropy Cones}\label{sec:entropyspace}
In this section, we share the main technical result regarding the containment relation between the hypergraph and stabilizer entropy cones. In particular, we show that the former is strictly contained inside the latter by proving entropy inequalities that separate the two cones appropriately. Before discussing this finding, we briefly review the formalism surrounding entropy cones and summarize the state of known results in Table~\ref{tab:entropy_cones}.

\subsection{Preliminaries}\label{ssec:prevres}

Given a quantum state $\rho_{1,\ldots,n}$ on $n$ parties, the entanglement entropy of a nonempty subsystem $I \subseteq [n]$ is given by~\eqref{eq:entropy}. With $n$ parties, and excluding the empty set, there are $D \equiv 2^n - 1$ possible such subsystems. The \textit{entropy vector} of the state is given by an ordered collection of these $D$ entropies, and we refer to the positive octant $\mathbb{R}_+^{D}$ to which they belong as \textit{entropy space}. The collection of all entropy vectors of a given subset of quantum states defines what is called an \textit{entropy cone}. When this cone is polyhedral, it can be fully specified in two equivalent ways: in terms of its outermost entropy vectors, called \textit{extreme rays}, or in terms of the tightest inequalities that define its supporting hyperplanes, known as \textit{facets}.

The following two universal quantum inequalities,
\begin{align}
    S_A + S_B &\geq S_{AB}, \\
    S_{AB} + S_{BC} &\geq S_{ABC} + S_{B},
\end{align}
respectively known as \textit{subadditivity} (SA) and \textit{strong subadditivity} (SSA), completely characterize the quantum entropy cone for up to $n=3$. That is, a vector in $\mathbb{R}_+^7$ satisfies the above inequalities if and only if there exists a tripartite state $\rho_{ABC}$ such that its subsystem entropies match the given vector. While the entropy cone for $4$ parties and higher is unknown for generic quantum states, complete descriptions have been obtained for holographic, stabilizer and hypergraph entropies at this party number (see Table~\ref{tab:entropy_cones} for a summary of results).

The holographic entropy cone was shown to be equivalent to the graph entropy cone for any number of parties~\cite{bao2015holographic}, and is known completely for up to $n=5$~\cite{hernandezcuenca2019, Avis:2021xnz}. Due to the holographic inequality of monogamy of mutual information (MMI), this cone is already strictly contained inside the stabilizer entropy cone for $n=3$. The stabilizer entropy cone is known for up to $n = 4$~\cite{linden2013quantum,Gross_2013} and, up to this number of parties, it is completely characterized by SA, SSA, and the Ingleton inequality~\eqref{eq:ingleton}. This cone is equivalent to the $4$-party quantum linear rank (QLR) cone~\cite{Bao:2020zgx}, defined by combining all linear rank inequalities that relate the ranks of subspaces of vector spaces~\cite{hammer2000inequalities}, but including weak monotonicity instead of classical monotonicity.\footnote{The Ingleton inequality is the only new such inequality for $n = 4$ apart from SA and SSA. See~\cite{Bao:2020zgx} for more details on the QLR cone and~\cite{Dougherty2009} for explicit linear rank inequalities for up to $n=5$.} Furthermore, it was shown in~\cite{Bao:2020zgx} that the entropy cone of $4$-party stabilizer states is also precisely equivalent to the entropy cone of $4$-party hypergraphs. However, it remained as an open question whether the equivalence between the stabilizer, QLR, and hypergraph cones continued to hold for higher party numbers. As will be shown in Section~\ref{ssec:containment}, the answer to this question is negative: there exists a class of stabilizer entropy vectors that cannot be realized by hypergraphs. Combined with~\cite{Walter:2020zvt}, which shows that hypergraph entropy vectors are stabilizer-realizable for all party numbers, this allows us to conclude that the hypergraph entropy cone is strictly contained inside the stabilizer entropy cone starting precisely at $n=5$ parties.

\subsection{Hypergraph, Stabilizer and Linear Rank Cones}\label{ssec:hslc}

It is challenging to study the relation between the hypergraph and stabilizer cones for $n\geq5$ because neither of them is known completely. Nevertheless, substantial progress can be made in the analysis of these objects by exploring the intimately related structure of cones bounded by linear rank inequalities~\cite{hammer2000inequalities,Dougherty2009}.

In what follows, it will be useful to consider not only the QLR cone introduced above, but also its cousin, the classical linear rank (CLR) cone, which includes classical monotonicity as an additional facet (thus making weak monotonicity a redundant inequality). The reason for this is that, while an extreme ray description of the QLR cone could not be obtained due to computational limitations,~\cite{Bao:2020zgx} managed to show that all extreme rays of the CLR cone are realizable by hypergraphs. Note that even though the facet description of the CLR and QLR cones differ only in that the latter does not include classical monotonicity as a facet, many of the extreme rays of the CLR cone cease to be extreme in the larger QLR cone. Indeed, one finds that out of the $162$ orbits of extreme rays of the former, only $40$ remain extreme in the latter. Nevertheless, the hypergraph-realizable extreme rays of the CLR cone suffice to show that all $31$ QLR inequalities for $n=5$ are facets of the hypergraph cone. By the result of~\cite{Walter:2020zvt}, this implies that all QLR inequalities are facets of the stabilizer cone as well. To our knowledge, this is the first time that the QLR inequalities have been checked to be facets for the stabilizer entropy cone\footnote{We thank an anonymous referee for improving our discussion on facets vs tight inequalities.} for all $n\leq5$, thus extending the boundedness result of~\cite{linden2013quantum}.

Not only have hypergraphs allowed us to prove tightness of the QLR inequalities that bound the stabilizer entropy cone, but they also let us generate some results about the reach of stabilizer extreme rays. More precisely, proving whether the QLR inequalities completely define the stabilizer cone would require an explicit realization of all of its extreme rays via stabilizer states. Although such an endeavour is beyond the scope of this note, we do highlight here that the realizability of all CLR extreme rays via hypergraphs already provides a nontrivial inner bound to the stabilizer cone~\cite{Walter:2020zvt}. In fact, this inner bound reaches the QLR facets at the $40$ orbits of extreme rays that the CLR and QLR share. Remarkably, one can also verify that all of the extreme ray orbits of the holographic entropy cone for $5$ parties except for the $18^{\text{th}}$ as listed in Table $3$ of~\cite{hernandezcuenca2019} constitute extreme rays of the QLR cone, and thus of the hypergraph and stabilizer ones as well. Of these, all but one are distinct from the $40$ that come from the CLR cone, thus revealing $17$ additional extreme rays of the QLR cone. Furthermore, from the $19$ graph state entropy vectors in Table~\ref{tab:rays}, one finds that only two give new extreme ray orbits of the QLR cone, corresponding to vectors $11$ and $15$. However, as will be shown below, the latter turns out to not be realizable by hypergraphs. This reveals one key result of this note: we have been able to enumerate $59$ extreme rays of the QLR cone which are also extreme rays of the stabilizer cone, and pick out a unique vector $15$ in Table~\ref{tab:rays} as the only extreme ray which is not hypergraph-realizable. While the other $58$ are of course extreme rays of the hypergraph cone, this particular one happens to lie strictly outside of it and will provide us with the result that hypergraph entropy vectors are strictly contained inside the stabilizer cone.

\begin{table}[]
    \centering
    \begin{tabular}{ |c||c|c|c|c|c| }
     \hline
     Entropy Cone     & $n=2$ & $n=3$ & $n=4$ & $n=5$ & $n=6$\\
     \hline
     Quantum        & SA, SSA  &  --  &    ?     & ?         & ? \\
     Stabilizer     & SA, SSA  &  --  & Ingleton & QLR+? & ? \\
     Hypergraph     & SA, SSA  &  --  & Ingleton & QLR+\eqref{eq:ineq15}+? & ? \\
     Holographic    & SA, SSA  & MMI & --        & $5$ in~\cite{bao2015holographic,hernandezcuenca2019,Avis:2021xnz}   &~\cite{n6wip}   \\
     \hline
    \end{tabular}
    \caption{Characterization in terms of inequalities of the entropy cones discussed here. Cones are ordered by inclusion from the largest to the smallest. In the columns, we list the inequalities that show up at increasing $n$, and use ``--'' to indicate no new inequalities for a given party number. Since every cone in each row is a subset of the previous one, all inequalities are inherited (possibly redundantly) for a given column. Cones with a currently incomplete description above some $n$ include question marks. All listed inequalities are facets, except for~\eqref{eq:ineq15}, which separates the stabilizer and hypergraph cones, but is not expected to be a facet of the latter because it is not balanced.}
    \label{tab:entropy_cones}
\end{table}

\subsection{Strict Containment}\label{ssec:containment}

The coincidence of the hypergraph entropy cone with the stabilizer one for all $n\leq4$ that was found in~\cite{Bao:2020zgx} was unexpected. Together with some additional suggestive evidence for $n=5$, this motivated a conjectured equivalence between the two cones at higher party numbers. Soon thereafter, it was shown in~\cite{Walter:2020zvt} that indeed one of the directions of this conjecture is correct: all hypergraph entropies are stabilizer-realizable, thus proving containment of the hypergraph cone in the stabilizer one. Here we show that, in contrast, the converse does not hold for any $n\geq5$. In other words, this cone containment is strict, in the sense that there exists stabilizer entropy vectors which cannot be realized by hypergraphs starting at $n=5$.

This result effectively follows by counterexample; we identify a specific entropy vector (namely, vector $15$ in Table~\ref{tab:rays}) which belongs to the stabilizer cone but lies strictly outside of the hypergraph cone. Such a finding is particularly nontrivial because, as previously stressed, neither the stabilizer nor the hypergraph cones are actually known for any $n\geq5$. Here we guide our search for candidate hypergraph inequalities by focusing on the graph states described previously. In particular, for mixed states on $5$ parties, we consider the $6$-qubit graph states in Figure $4$ of~\cite{hein2004multiparty}, which are themselves all stabilizer states. The entropy vectors of these graph states are listed in order in Table~\ref{tab:rays}. An exhaustive search for hypergraph instantiations of these entropy vectors for up to $5$ internal vertices turns out to yield solutions for all but graph $15$, as shown in Figure~\ref{fig:hypergraph_realizations}. This is suggestive that its entropies are not realizable by hypergraphs, but of course does not provide a definite answer by itself. To make this conclusive, we proceed by searching for an inequality that rules it out and can be proven to be true for all hypergraphs via a contraction map.

\begin{table}[t]
    \centering
	\hrule
	\vspace{5pt}
	\begin{enumerate}[itemsep=-3pt,leftmargin=100pt]
        \item{ $\left(1\,1\,0\,0\,0\,;0\,1\,1\,1\,1\,1\,1\,0\,0\,0\,;0\,0\,0\,1\,1\,1\,1\,1\,1\,0\,;0\,0\,0\,1\,1\,;0\right)$}
        \item{ $\left(1\,1\,1\,0\,0\,;1\,1\,1\,1\,1\,1\,1\,1\,1\,0\,;0\,1\,1\,1\,1\,1\,1\,1\,1\,1\,;0\,0\,1\,1\,1\,;0\right)$}
        \item{ $\left(1\,1\,1\,1\,0\,;1\,1\,1\,1\,1\,1\,1\,1\,1\,1\,;1\,1\,1\,1\,1\,1\,1\,1\,1\,1\,;0\,1\,1\,1\,1\,;0\right)$}
        \item{ $\left(1\,1\,1\,1\,0\,;1\,2\,2\,1\,2\,2\,1\,1\,1\,1\,;1\,1\,1\,1\,2\,2\,1\,2\,2\,1\,;0\,1\,1\,1\,1\,;0\right)$}
        \item{ $\left(1\,1\,1\,1\,1\,;1\,1\,1\,1\,1\,1\,1\,1\,1\,1\,;1\,1\,1\,1\,1\,1\,1\,1\,1\,1\,;1\,1\,1\,1\,1\,;0\right)$}
        \item{ $\left(1\,1\,1\,1\,1\,;1\,2\,2\,1\,2\,2\,1\,1\,2\,2\,;2\,2\,1\,1\,2\,2\,1\,2\,2\,1\,;1\,1\,1\,1\,1\,;0\right)$}
        \item{ $\left(1\,1\,1\,1\,1\,;1\,2\,2\,2\,2\,2\,2\,2\,2\,1\,;1\,2\,2\,2\,2\,2\,2\,2\,2\,1\,;1\,1\,1\,1\,1\,;0\right)$}
        \item{ $\left(1\,1\,1\,1\,1\,;2\,2\,2\,2\,2\,2\,2\,2\,2\,2\,;2\,2\,2\,2\,2\,2\,2\,2\,2\,2\,;1\,1\,1\,1\,1\,;0\right)$}
        \item{ $\left(1\,1\,1\,1\,1\,;1\,1\,1\,1\,1\,1\,1\,1\,1\,1\,;1\,1\,1\,1\,1\,1\,1\,1\,1\,1\,;1\,1\,1\,1\,1\,;1\right)$}
        \item{ $\left(1\,1\,1\,1\,1\,;1\,1\,2\,2\,1\,2\,2\,2\,2\,1\,;1\,2\,2\,2\,2\,1\,2\,2\,1\,1\,;2\,2\,1\,1\,1\,;1\right)$}
        \item{ $\left(1\,1\,1\,1\,1\,;1\,2\,2\,2\,2\,2\,2\,1\,1\,1\,;2\,2\,2\,2\,2\,2\,2\,2\,2\,1\,;2\,2\,2\,1\,1\,;1\right)$}
        \item{ $\left(1\,1\,1\,1\,1\,;1\,2\,2\,2\,2\,2\,2\,2\,2\,1\,;2\,2\,2\,2\,2\,2\,2\,2\,2\,1\,;2\,2\,2\,1\,1\,;1\right)$}
        \item{ $\left(1\,1\,1\,1\,1\,;1\,2\,2\,2\,2\,2\,2\,2\,2\,1\,;1\,2\,2\,3\,3\,2\,3\,3\,2\,1\,;2\,2\,1\,2\,2\,;1\right)$}
        \item{ $\left(1\,1\,1\,1\,1\,;1\,2\,2\,2\,2\,2\,2\,2\,2\,2\,;1\,2\,2\,2\,3\,3\,2\,3\,3\,2\,;1\,2\,2\,2\,2\,;1\right)$}
        \item{ $\left(1\,1\,1\,1\,1\,;2\,2\,2\,2\,2\,1\,2\,2\,1\,2\,;2\,2\,2\,2\,2\,2\,2\,2\,2\,2\,;2\,2\,2\,2\,1\,;1\right)$}
        \item{ $\left(1\,1\,1\,1\,1\,;1\,2\,2\,2\,2\,2\,2\,2\,2\,1\,;2\,2\,2\,3\,3\,2\,3\,3\,2\,2\,;2\,2\,1\,2\,2\,;1\right)$}
        \item{ $\left(1\,1\,1\,1\,1\,;2\,2\,2\,2\,2\,2\,2\,2\,2\,2\,;2\,2\,2\,2\,3\,3\,2\,3\,3\,2\,;1\,2\,2\,2\,2\,;1\right)$}
        \item{ $\left(1\,1\,1\,1\,1\,;2\,2\,2\,2\,2\,2\,2\,2\,2\,2\,;2\,3\,3\,3\,2\,3\,2\,3\,3\,2\,;2\,2\,2\,2\,2\,;1\right)$}
        \item{ $\left(1\,1\,1\,1\,1\,;2\,2\,2\,2\,2\,2\,2\,2\,2\,2\,;3\,3\,3\,3\,3\,3\,3\,3\,3\,3\,;2\,2\,2\,2\,2\,;1\right)$}
	\end{enumerate}
	\vspace{-5pt}
	\hrule
	\vspace{10pt}
	\caption{Entropy vectors for each of the $19$ graph states on $6$ qubits in Figure $4$ of~\cite{hein2004multiparty}. Entropies are listed in lexicographical order, with subsystems of increasing number of parties separated by a semicolon.}.
	\label{tab:rays}
\end{table}

\begin{figure}
    \centering
    \includegraphics[width=\textwidth]{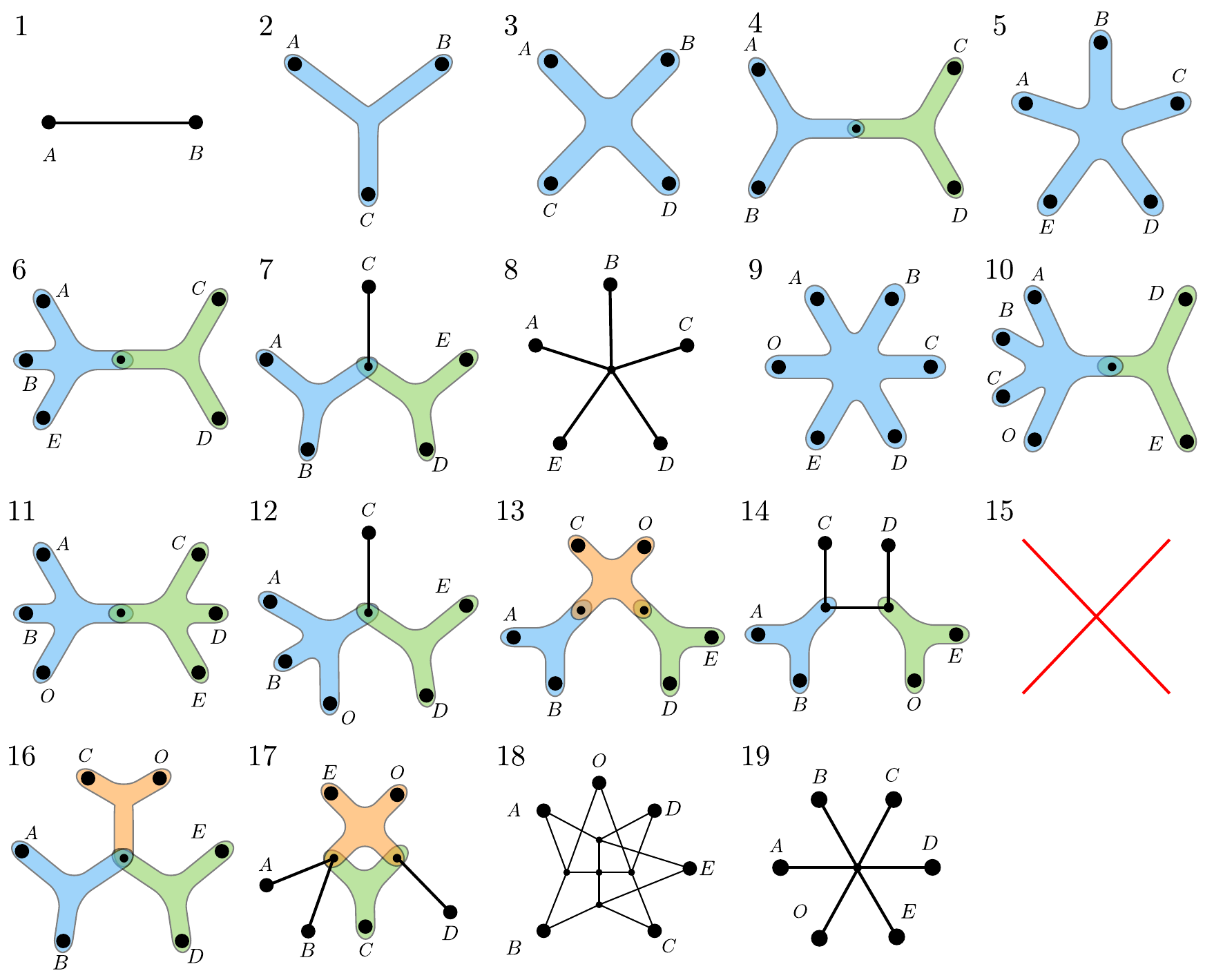}
    \caption{Hypergraph realizations of the entropy vectors listed in Table~\ref{tab:rays}. All edges and hyperedges have unit weight. External vertices are labelled by letters in alphabetic order, with $O$ reserved for the purifier. Internal vertices are represented as unlabelled dots. Graph $15$ does not have a hypergraph realization.}
    \label{fig:hypergraph_realizations}
\end{figure}

Reproducing from Table~\ref{tab:rays}, the entropy vector of graph state $15$ reads
\begin{equation}\label{eq:gs15}
    \Vec{S}(G_{15}) = \left(1\,1\,1\,1\,1\,;2\,2\,2\,2\,2\,1\,2\,2\,1\,2\,;2\,2\,2\,2\,2\,2\,2\,2\,2\,2\,;2\,2\,2\,2\,1\,;1\right).
\end{equation}
Hence we search for an entropy inequality of the form $\Vec{Q}\cdot\Vec{S}\geq0$ that is valid on any hypergraph entropy vector $\Vec{S}\in\mathbb{R}_+^D$ but violated by~\eqref{eq:gs15}, i.e. such that $\Vec{Q}\cdot\Vec{S}(G_{15})<0$. Geometrically, the vector $\Vec{Q}\in\mathbb{R}^D$ which defines such an inequality corresponds to the vector normal to the hyperplane $\Vec{Q}\cdot\Vec{S}=0$ in entropy space. As such, what we would like to obtain is a hyperplane that excises the vector $\Vec{S}(G_{15})$ away from the half-space $\Vec{Q}\cdot\Vec{S}\geq0$ in which the hypergraph entropy cone is contained. Although the latter is not known completely, a substantial partial description of it can be attained by collecting the results in~\cite{hernandezcuenca2019,Bao:2020zgx} and Figure~\ref{fig:hypergraph_realizations}. In particular, these show that, for $5$ parties, all extreme rays of the holographic entropy cone~\cite{hernandezcuenca2019}, all extreme rays of the CLR cone~\cite{Bao:2020zgx}, and all entropy vectors in Table~\ref{tab:rays} corresponding to $6$-qubit graph states~\cite{hein2004multiparty}, with the only exception of $15$, are hypergraph-realizable. Therefore, a necessary condition for a vector $\Vec{Q}$ to define the desired hyperplane is that it obeys $\Vec{Q}\cdot\Vec{S}\geq0$ for all of these vectors $\Vec{S}$, known to have hypergraph realizations. By simultaneously demanding that $\Vec{Q}\cdot\Vec{S}(G_{15})<0$, one can search for $\Vec{Q}$ solutions and hope that at least one be a valid inequality on hypergraphs and provable by the method of contraction described in~\cite{Bao:2020zgx}.

Precisely as desired, it was possible to obtain some candidate $\Vec{Q}$ solutions which could then be conclusively proven to define valid inequalities on arbitrary hypergraphs. Among various such inequalities which we managed to prove by contraction, we found 
\begin{equation}\label{eq:ineq15}
    S_{BD}+S_{CE}+S_{ABC}+S_{ABE}+2 S_{ACD}+S_{BCDE}-2 S_{AB}-S_{AC}-S_{BE}-S_{CD}-S_{ABCD} \geq 0,
\end{equation}
which constitutes a valid inequality on hypergraphs. A contraction map which proves \eqref{eq:ineq15} can be found in Appendix \ref{app:f}. The LHS of the inequality evaluates to $-1$ on~\eqref{eq:gs15}, thereby ruling out $\Vec{S}(G_{15})$ as possibly being realizable by hypergraphs. We obtained $10$ other symmetry inequivalent inequalities which are altogether non-redundant and similarly exclude $\Vec{S}(G_{15})$ from the hypergraph cone. Since their structure does not seem to be particularly illuminating, we have decided not to include them in this note, but they are available to the interested reader upon request. The existence of~\eqref{eq:ineq15} and other inequalities bounding the hypergraph cone implies that some subspace of vectors of the stabilizer cone are inaccessible to hypergraphs. Using the containment result of~\cite{Walter:2020zvt}, the validity of~\eqref{eq:ineq15} thus proves strict containment of the hypergraph cone inside the stabilizer cone for all $n\geq5$.

It is worth noting that, while valid, it is unlikely that the inequality in~\eqref{eq:ineq15} defines a facet of the hypergraph cone. Put differently, whatever the facets of the latter are (which include all $5$-party QLR inequalities as explained in Section~\ref{ssec:hslc}), we expect them to yield some inequalities stronger than~\eqref{eq:ineq15}. The reason for this is that~\eqref{eq:ineq15} is not symmetry-related to any balanced inequality,\footnote{An entropy inequality is balanced in some party $X$ if the sum of the coefficients carried by entropy terms involving $X$ is zero. One easily checks that~\eqref{eq:ineq15} fails to be balanced because of $E$.} and general symmetry arguments suggest that all hypergraph facets should have balanced representatives in their symmetry orbits.\footnote{This feature of not having balanced representatives also extends to all other inequalities found thus far valid for the hypergraph entropy cone that exclude $\Vec{S}(G_{15})$.} From a more physical standpoint, and assuming that hypergraphs represent an interesting class of quantum entropy rays, we would expect these inequalities to also define meaningful boundaries in entropy space. In the context of QFT, balance is a natural requisite for a quantity to be finite and scheme-independent upon regularization of von Neumann entropies~\cite{Hubeny:2018ijt}. Hence failure to obey balance suggests that the saturation of the inequality by entropy rays is not scheme-independent, which makes it less physically meaningful.

\section{Discussion}\label{sec:discussion}

Given the inequivalence of the hypergraph entropy cone to both the quantum and stabilizer entropy cones, it is tantalizing to consider whether further generalizations of it could coincide with either of these cones at some, or perhaps all higher party numbers. One potential avenue of exploration would be to topologically link or braid otherwise disjoint graphs or hypergraphs, and to generalize the notion of the cut entropy of a given external vertex set to the min-cut that removes not only all hyperedge paths between the chosen external vertices and the others, but also all links that connect them. It is clear that all hypergraph cut entropies are a special case of this generalization, corresponding to the case in which one has a single link. Including such topological structure could potentially endow the cut problem with additional flexibility to realize entropy vectors that a single hypergraph could not. Interestingly, the addition of such topological structure to hypergraph models of entanglement could tie back to the topological considerations of~\cite{Salton:2016qpp,Chun:2017hja,Balasubramanian:2016sro} in relating link invariants to entanglement entropy.

Although we have shown strict containment of the hypergraph entropy cone within the stabilizer one starting at $n=5$, this note still leaves the full description of the hypergraph cone for $5$ and higher party numbers as an open problem. More generally, given this strict containment relation, it would be interesting to classify the subset of stabilizer states that are hypergraph-realizable, and clarify the properties that separate them from states that are not. In other words, we would like to better understand if there is a fundamental connection between hypergraphs and some class of stabilizer states. Since stabilizer entropies can be understood in terms of algebraic~\cite{fattal2004, linden2013quantum} and phase space formalisms~\cite{Gross_2013}, it would be interesting to investigate whether there is a precise relation between entropies in these contexts and in our hypergraph models of entanglement for states with hypergraph-realizable entropies. It is further possible that, for such states, the hypergraph construction can give an efficient algorithm for the computation of entanglement entropies of the states in question, something which is useful for efficient resource estimation.

Finally, it is worth noting that currently there is significant evidence for and none against the coincidence of the stabilizer and QLR cones. This is also an interesting possibility worthy of further exploration, possibly with the tools developed for hypergraph entropies or generalizations thereof.

\section*{Acknowledgements}
We thank Nils Baas and Michael Walter for useful discussions and comments.
N.B. is supported by the National Science Foundation under grant number 82248-13067-44-PHPXH, by the Department of Energy under grant number DE-SC0019380, and by the Computational Science Initiative at Brookhaven National Laboratory. S.H.C. is supported by ``la Caixa'' Foundation (ID 100010434) under fellowship LCF/BQ/AA17/11610002 and by the National Science Foundation under grant number PHY-1801805. V.P.S. gratefully acknowledges support by the NSF GRFP under grant number DGE-1752814. We would also like to extend a special thank you to essential workers and their families during this time of crisis.

\bibliographystyle{JHEP}
\bibliography{references}

\newpage

\appendix

\section{Contraction Map Proof}\label{app:f}

In Section~\ref{ssec:containment}, we claimed inequality \eqref{eq:ineq15} to be valid for hypergraphs, thereby ruling out the stabilizer ray $15$ in \eqref{eq:gs15} from the hypergraph entropy cone. Here, we provide a tabular representation of a contraction map which proves validity of inequality \eqref{eq:ineq15} on hypergraphs according to the results at the end of Section~\ref{ssec:hypermodels}. See \cite{Bao:2020zgx} for more details.

\begin{equation*}
\renewcommand*{\arraystretch}{.57}
\footnotesize
\begin{array}{c|cccccc|cccccc}
 \text{~~~~} & \text{~~BD~~} & \text{~~CE~~} & \text{~ABC~} & \text{~ABE~} & 2 \text{ACD~} & \text{BCDE}~ & \text{~~AB~~} & \text{~~AB~~} & \text{~~AC~~} & \text{~~BE~~} & \text{~~CD~~} & \text{ABCD} \\\hline
 \text{O} & 0 & 0 & 0 & 0 & 0 & 0 & 0 & 0 & 0 & 0 & 0 & 0 \\
 \text{A} & 0 & 0 & 1 & 1 & 1 & 0 & 1 & 1 & 1 & 0 & 0 & 1 \\
 \text{B} & 1 & 0 & 1 & 1 & 0 & 1 & 1 & 1 & 0 & 1 & 0 & 1 \\
 \text{C} & 0 & 1 & 1 & 0 & 1 & 1 & 0 & 0 & 1 & 0 & 1 & 1 \\
 \text{D} & 1 & 0 & 0 & 0 & 1 & 1 & 0 & 0 & 0 & 0 & 1 & 1 \\
 \text{E} & 0 & 1 & 0 & 1 & 0 & 1 & 0 & 0 & 0 & 1 & 0 & 0 \\\hline
 \text{} & 0 & 0 & 0 & 0 & 0 & 1 & 0 & 0 & 0 & 0 & 0 & 1 \\
 \text{} & 0 & 0 & 1 & 0 & 0 & 0 & 0 & 0 & 0 & 0 & 0 & 1 \\
 \text{} & 0 & 0 & 0 & 1 & 0 & 0 & 0 & 0 & 0 & 0 & 0 & 1 \\
 \text{} & 0 & 0 & 0 & 0 & 1 & 0 & 0 & 0 & 1 & 0 & 0 & 1 \\
 \text{} & 0 & 0 & 0 & 0 & 1 & 1 & 0 & 0 & 0 & 0 & 0 & 1 \\
 \text{} & 0 & 0 & 0 & 1 & 0 & 1 & 0 & 0 & 0 & 1 & 0 & 1 \\
 \text{} & 0 & 0 & 1 & 0 & 1 & 0 & 0 & 1 & 1 & 0 & 0 & 1 \\
 \text{} & 0 & 0 & 1 & 1 & 0 & 0 & 0 & 1 & 0 & 0 & 0 & 1 \\
 \text{} & 0 & 0 & 1 & 0 & 0 & 1 & 0 & 0 & 0 & 1 & 0 & 1 \\
 \text{} & 0 & 0 & 1 & 1 & 0 & 1 & 0 & 1 & 0 & 1 & 0 & 1 \\
 \text{} & 0 & 0 & 1 & 0 & 1 & 1 & 0 & 0 & 1 & 0 & 0 & 1 \\
 \text{} & 0 & 0 & 0 & 1 & 1 & 0 & 0 & 1 & 1 & 0 & 0 & 1 \\
 \text{} & 1 & 0 & 0 & 0 & 0 & 1 & 0 & 0 & 0 & 1 & 0 & 1 \\
 \text{} & 1 & 0 & 0 & 0 & 0 & 0 & 0 & 0 & 0 & 0 & 0 & 1 \\
 \text{} & 0 & 1 & 0 & 0 & 0 & 1 & 0 & 0 & 0 & 0 & 0 & 0 \\
 \text{} & 0 & 1 & 0 & 0 & 0 & 0 & 0 & 0 & 0 & 0 & 0 & 0 \\
 \text{} & 1 & 0 & 0 & 0 & 1 & 0 & 0 & 0 & 0 & 0 & 0 & 1 \\
 \text{} & 0 & 1 & 0 & 1 & 0 & 0 & 0 & 0 & 0 & 0 & 0 & 0 \\
 \text{} & 1 & 0 & 1 & 0 & 0 & 1 & 0 & 1 & 0 & 1 & 0 & 1 \\
 \text{} & 1 & 0 & 1 & 0 & 0 & 0 & 0 & 0 & 0 & 1 & 0 & 1 \\
 \text{} & 0 & 1 & 1 & 0 & 0 & 1 & 0 & 0 & 0 & 0 & 0 & 1 \\
 \text{} & 0 & 1 & 1 & 0 & 0 & 0 & 0 & 0 & 0 & 0 & 0 & 1 \\
 \text{} & 1 & 0 & 0 & 1 & 0 & 0 & 0 & 0 & 0 & 1 & 0 & 1 \\
 \text{} & 1 & 0 & 0 & 1 & 0 & 1 & 0 & 1 & 0 & 1 & 0 & 1 \\
 \text{} & 1 & 0 & 1 & 1 & 0 & 0 & 0 & 1 & 0 & 1 & 0 & 1 \\
 \text{} & 0 & 1 & 0 & 0 & 1 & 0 & 0 & 0 & 0 & 0 & 0 & 1 \\
 \text{} & 0 & 1 & 0 & 0 & 1 & 1 & 0 & 0 & 0 & 0 & 1 & 1 \\
 \text{} & 0 & 1 & 1 & 0 & 1 & 0 & 0 & 0 & 1 & 0 & 0 & 1 \\
 \text{} & 0 & 0 & 0 & 1 & 1 & 1 & 0 & 0 & 1 & 0 & 0 & 1 \\
 \text{} & 0 & 0 & 1 & 1 & 1 & 1 & 0 & 1 & 1 & 0 & 0 & 1 \\
 \text{} & 0 & 1 & 1 & 1 & 0 & 1 & 0 & 0 & 0 & 1 & 0 & 1 \\
 \text{} & 0 & 1 & 1 & 1 & 0 & 0 & 0 & 0 & 0 & 0 & 0 & 1 \\
 \text{} & 1 & 0 & 1 & 0 & 1 & 1 & 0 & 0 & 0 & 0 & 0 & 1 \\
 \text{} & 1 & 0 & 1 & 0 & 1 & 0 & 0 & 0 & 1 & 0 & 0 & 1 \\
 \text{} & 1 & 0 & 0 & 1 & 1 & 0 & 0 & 0 & 1 & 0 & 0 & 1 \\
 \text{} & 1 & 0 & 0 & 1 & 1 & 1 & 0 & 0 & 0 & 0 & 0 & 1 \\
 \text{} & 1 & 0 & 1 & 1 & 1 & 0 & 0 & 1 & 1 & 0 & 0 & 1 \\
 \text{} & 1 & 0 & 1 & 1 & 1 & 1 & 0 & 1 & 0 & 0 & 0 & 1 \\
 \text{} & 0 & 1 & 0 & 1 & 1 & 0 & 0 & 0 & 1 & 0 & 0 & 1 \\
 \text{} & 0 & 1 & 0 & 1 & 1 & 1 & 0 & 0 & 0 & 0 & 0 & 1 \\
 \text{} & 0 & 1 & 1 & 1 & 1 & 0 & 0 & 1 & 1 & 0 & 0 & 1 \\
 \text{} & 0 & 1 & 1 & 1 & 1 & 1 & 0 & 0 & 1 & 0 & 0 & 1 \\
 \text{} & 1 & 1 & 0 & 0 & 0 & 1 & 0 & 0 & 0 & 0 & 0 & 1 \\
 \text{} & 1 & 1 & 0 & 0 & 0 & 0 & 0 & 0 & 0 & 0 & 0 & 0 \\
 \text{} & 1 & 1 & 1 & 0 & 0 & 1 & 0 & 0 & 0 & 1 & 0 & 1 \\
 \text{} & 1 & 1 & 1 & 0 & 0 & 0 & 0 & 0 & 0 & 0 & 0 & 1 \\
 \text{} & 1 & 1 & 0 & 1 & 0 & 0 & 0 & 0 & 0 & 0 & 0 & 1 \\
 \text{} & 1 & 1 & 0 & 1 & 0 & 1 & 0 & 0 & 0 & 1 & 0 & 1 \\
 \text{} & 1 & 1 & 1 & 1 & 0 & 1 & 0 & 1 & 0 & 1 & 0 & 1 \\
 \text{} & 1 & 1 & 1 & 1 & 0 & 0 & 0 & 0 & 0 & 1 & 0 & 1 \\
 \text{} & 1 & 1 & 0 & 0 & 1 & 0 & 0 & 0 & 0 & 0 & 0 & 0 \\
 \text{} & 1 & 1 & 0 & 0 & 1 & 1 & 0 & 0 & 0 & 0 & 0 & 1 \\
 \text{} & 1 & 1 & 1 & 0 & 1 & 1 & 0 & 0 & 0 & 0 & 1 & 1 \\
 \text{} & 1 & 1 & 1 & 0 & 1 & 0 & 0 & 0 & 0 & 0 & 0 & 1 \\
 \text{} & 1 & 1 & 0 & 1 & 1 & 0 & 0 & 0 & 0 & 0 & 0 & 1 \\
 \text{} & 1 & 1 & 0 & 1 & 1 & 1 & 0 & 0 & 0 & 0 & 0 & 0 \\
 \text{} & 1 & 1 & 1 & 1 & 1 & 0 & 0 & 0 & 1 & 0 & 0 & 1 \\
 \text{} & 1 & 1 & 1 & 1 & 1 & 1 & 0 & 0 & 0 & 0 & 0 & 1 \\
\end{array}
\end{equation*}

\end{document}